\DeclareMathAlphabet{\mathcalligra}{T1}{calligra}{m}{n} 
\theoremstyle{definition}
\newtheorem{definition}{Definition}
\newtheorem{remark}{Remark}
\theoremstyle{plain}
\newtheorem{lemma}{Lemma}
\newtheorem{theorem}{Theorem}
\newtheorem{proposition}{Proposition}
\newtheorem{corollary}{Corollary}
\newcommand{\red}{\textcolor{black}}
\begin{document}

\begin{frontmatter}
\title{ Extremal values of the Sackin tree balance index}

\author{Mareike Fischer\corref{cor1}}
\ead{email@mareikefischer.de}
\cortext[cor1]{Corresponding author}

\address{Institute of Mathematics and Computer Science, Greifswald University, Greifswald, Germany}

\begin{abstract} Tree balance plays an important role in different research areas like theoretical computer science and mathematical phylogenetics. For example, it has long been known that under the Yule model, a pure birth process, imbalanced trees are more likely than balanced ones. \red{Also, concerning ordered search trees, more balanced ones allow for more efficient data structuring than imbalanced ones.} Therefore, different methods to measure the balance of trees were introduced. The Sackin index is one of the most frequently used measures for this purpose. In many contexts, statements about the minimal and maximal values of this index have been discussed, but formal proofs have \red{only been provided for some of them, and only in the context of ordered binary (search) trees, not for general rooted trees}. Moreover, while the number of trees with maximal Sackin index as well as the number of trees with minimal Sackin index when the number of leaves is a power of 2 are relatively easy to understand, the number of trees with minimal Sackin index for all other numbers of leaves has been completely unknown. In this manuscript, we \red{extend the findings on trees with minimal and maximal Sackin indices from the literature on ordered trees and subsequently use our results to} provide formulas to explicitly calculate the numbers of such trees. \red{We also extend previous studies by analyzing the case when the underlying trees need not be binary. Finally, we use our results to contribute both to the phylogenetic as well as the computer scientific literature by using the new findings on Sackin minimal and maximal trees in order to derive formulas to calculate the number of both minimal and maximal phylogenetic trees as well as minimal and maximal ordered trees both in the binary and non-binary settings. All our results have been implemented in the Mathematica package SackinMinimizer, which has been made publicly available.}
\end{abstract}

\begin{keyword}
tree balance \sep Sackin index \sep rooted binary tree
 \end{keyword}
\end{frontmatter}


\section{Introduction}

Rooted trees, and binary ones in particular, play a fundamental role in many sciences as they can be used as a basis for search algorithms \cite{knuth1,knuth3} as well as, amongst others, as a model for evolution \cite{felsenstein_2004,semple_steel_2003}. In many cases where these trees occur, probability distributions are not always uniform concerning the degree of tree balance -- for instance, the Yule model in phylogenetics, which is a pure birth process, has long been known to lead to more imbalanced trees. A simple example is depicted in Figure \ref{yule}, where it can be seen that if all leaves of a tree with three leaves are equally likely to give rise to a new leaf, then two of them lead to the same (`imbalanced') tree, whereas the other possible tree (the `balanced' one) occurs only once. So in order to understand such processes and their possible bias towards imbalanced (or, in other cases, balanced) trees, one has to be able to classify the degree of balance in more detail than just in a binary way (`balanced' versus `imbalanced'). Therefore, various balance indices were introduced and have been used over the years, e.g.  \citep{blum,colless,cophenetic,sackin,steel2016}. One of the most frequently used and discussed such indices is the Sackin index \cite{sackin}.

\begin{figure}
	\centering
	\includegraphics[scale=.3]{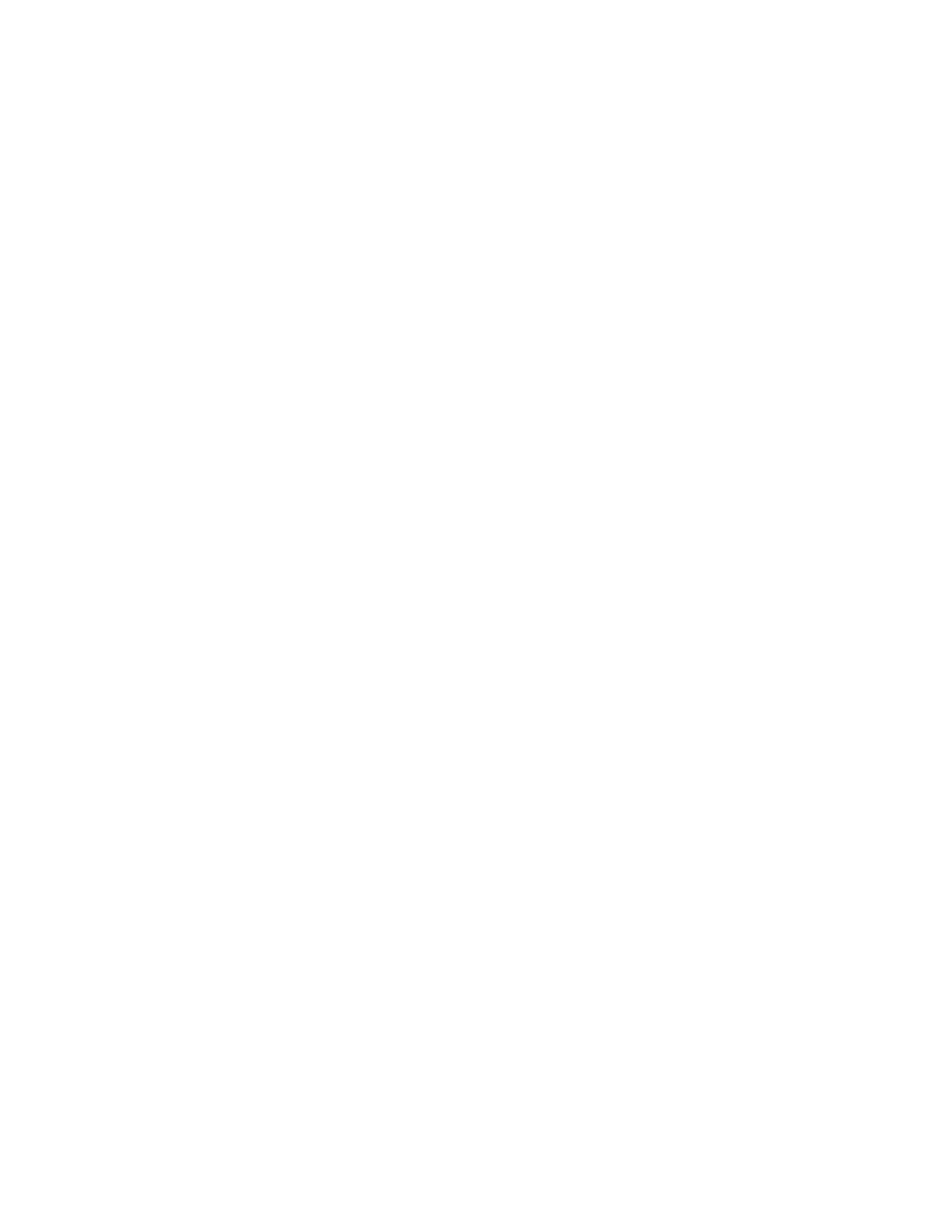}
	\caption{The Yule process splits one leaf at a time uniformly at random to form a so-called cherry. It can be easily seen that this leads to a tree shape bias already when there are $n=4$ leaves. This is due to the fact that the only rooted binary tree on three leaves has two leaves that give rise to the tree on the left, which is considered `imbalanced' (it is the so-called caterpillar tree $T^{cat}_4$ on 4 leaves), whereas only one leaf leads to the so-called fully balanced tree $T_2^{bal}$ of height 2.
	}
	\label{yule}
\end{figure}

This index has been observed to have some very nice properties \red{for binary trees} -- for instance, it has been stated that its maximum is achieved by the caterpillar tree (the unique  tree with only one `cherry', i.e. with only one internal node whose two descendants are both leaves) and that, whenever the number $n$ of leaves equals a power of 2, i.e. when $n=2^k$ for some $k$, the minimum is achieved by the so-called fully balanced tree of height $k$, i.e. the tree in which all leaves have distance $k$ to the root  \cite{shao}. \red{While these statements can be found in the phylogenetic literature, they are typically stated there both without rigorous proofs and without references.}

\red{However, these statements are actually known, at least to some extent, and have been proven in a totally different context -- namely, in theoretical computer science. They actually trace back to the famous book series \enquote{The art of computer programming} by Knuth \cite{knuth1}, the first edition of which appeared already in the late 1960s. Knuth and other computer scientists do not use the term Sackin index; they use the term \emph{total external path length} instead, and they do not consider general rooted binary trees but ordered binary search trees (i.e. binary trees with a distinction between left and right), which may have contributed to the fact that their proofs are widely unknown in the phylogenetic community, even though the characterizations given in \cite{knuth1} actually also apply to non-ordered trees. }

\red{In this manuscript, as a first step we will present some known results and transfer them from ordered search trees to general rooted binary trees. We will thus show that the Sackin index indeed has the above mentioned properties, which are desirable for a good tree balance index, as the caterpillar tree is normally perceived as very `imbalanced', whereas the fully balanced tree is normally referred to as very `balanced' (which explains its name).} 

\red{For non-binary trees, however, even less has been proven in the literature, even though this case is less involved and some statements can be found in some manuscripts, even if they are to the best of our knowledge not stated anywhere with rigorous proofs. For instance, some authors mention that the so-called star tree, i.e. the unique rooted tree with only one inner node, is the unique Sackin minimal tree \cite{spanierNeu}. We present a concise rigorous proof for this as well as for the fact that even in the setting where the tree need not be binary, the binary caterpillar is still the unique most imbalanced tree concerning the Sackin index.} 

The idea of the Sackin index is to assign a small number to trees that are perceived as balanced and a high number to more imbalanced trees -- i.e. the higher the Sackin index, the more imbalanced the tree. So while some statements on the maximum of the Sackin index and its minimum in the special case $n=2^k$ can already be found in the literature, even if without proofs, little has been known about trees with the minimum Sackin index for $n\neq 2^k$ \red{-- in particular, so far no formula has been known to count these trees. While it has long been known that the binary tree achieving this minimum in such cases need not be unique, the number of most balanced or most imbalanced binary trees has never been formally investigated; neither in the phylogenetic nor in the computer scientific  literature.} 

\red{It is the first aim the present manuscript to give an overview of the different versions of the Sackin index that can be found in the literature and to link them to the notion of \emph{total external path length} in computer science. We then fully characterize both for binary and non-binary trees all Sackin minimal and maximal trees as well as the corresponding minimal and maximal Sackin values. We also point out which of these characterizations and proofs are new to the literature and which ones can be traced back, for instance, to theoretical computer science. We then turn our attention both to phylogenetic as well as to ordered trees and generalize our findings to these settings.}

\red{Ultimately, we use our characterizations to count all Sackin minimal and maximal trees -- in the binary and non-binary settings, in the phylogenetic setting, and in the ordered setting. Some of these enumerations are recursive, while others are explicit, and some of them are even new to the Online Encyclopedia of Integer Sequences \cite{OEIS}, i.e. they have not occurred in other combinatorial contexts yet, whereas some can be linked to other research areas.}

\red{In the last section of the present manuscript, we give a brief discussion and point out some areas for future research.}

\red{We have implemented the results presented in this manuscript in a Mathematica \cite{Mathematica} package called SackinMinimizer and made this package publicly available \cite{mathematicapackage}.} 

\section{Preliminaries}\label{sec_prelim}

Before we can start to discuss the Sackin tree balance index, we first need to introduce all concepts used in this manuscript. We start with trees: {\em Trees} are connected, acyclic graphs with node set $V$ and edge set $E$. We use $V^1$ in order to denote the set of {\em leaves} of a tree, i.e. the set of nodes of degree at most 1. All nodes $v$ that are not leaves, i.e. $v\in V\setminus V^1$, are called {\em internal nodes}. The set of internal nodes of a tree $T$ will be denoted by $\mathring{V}(T)$, or, whenever there is no ambiguity, simply by $\mathring{V}$. 

All trees $T$ in this manuscript are assumed to be {\em rooted}, i.e. they have a designated inner node that is called {\em root}. \red{Apart from the root, no vertices of degree 2 are allowed.} \red{Unless explicitly stated otherwise, we consider trees to be {\em binary}\footnote{Note that our concept of binary trees corresponds to the one in mathematical phylogenetics, where only one node of degree 2 is allowed, namely the root. This is, however, in contrast to computer science, where sometimes binary trees are allowed to have multiple nodes of degree 2.}}, i.e. if these trees have an internal node at all, they have one root node $\rho$ of degree 2 and all other internal nodes have degree 3. The only rooted binary tree which does not have an internal node is the tree that consists of only one node and no edge -- in this special case, the only node is for technical reasons at the same time defined to be the root and the only leaf of the tree, so it is the only case where the root is not an internal node. \red{Please note that by a slight abuse of notation, which is common in mathematical phylogenetics, the set of non-binary trees contains also the binary ones -- non-binary thus is meant in the sense of \enquote{not necessarily binary}. So if we discuss Sackin minimal and maximal non-binary trees in Section \ref{sec_nonbin}, this optimization includes also binary trees.}

Furthermore, for technical reasons all tree edges in this manuscript are implicitly assumed to be {\em directed} from the root to the leaves. Thus, for an edge $e=(u,v)$ of $T$, it makes sense to refer to $u$ as the {\em direct ancestor } or {\em parent } of $v$ (and $v$ as the {\em direct descendant} or {\em child} of $u$). More generally, when there is a directed path from $\rho$ to $v$ employing $u$, $u$ is called {\em ancestor} of $v$ (and $v$ descendant of $u$). Two leaves $v$ and $w$ are said to form a {\em cherry}, denoted by $[v,w]$, if $v$ and $w$ have the same parent, i.e. if there exists an internal node $u$ in $V$ such that $(u,v)$ and $(u,w)$ are edges in $E$. Note that every rooted tree with at least 2 leaves has at least one cherry.

Let $T$ be a rooted tree with root $\rho$, and let $x \in V^1$ be a leaf of $T$. Then we denote by $\delta_x$ the {\em depth} of $x$ in $T$, which is the number of edges on the unique shortest path from $\rho$ to $x$. Then, the {\em height} of $T$ is defined as $h(T)=\max\limits_{x \in V^1} \delta_x$, i.e. as the maximum of these distances. Note that whenever a leaf $v$ has maximal depth, i.e. whenever $h(T)=\delta_v$, $v$ is element of a cherry. This is due to the fact that if another direct descendant\footnote{\red{Note that in a binary tree, there is precisely one other such descendant; in a non-binary tree, there may be even more than that.}}, say $w$, of the parent of $v$, say $u$, was not a leaf but an internal node, it would have descending nodes of a greater depth than $\delta_v=\delta_w$, which would contradict the maximality of $\delta_v$. This is why in this manuscript, instead of considering both $v$ and $w$ separately as leaves of maximal depth, we sometimes refer to a cherry $[v,w]$ as {\em cherry of maximal depth}.

Moreover, recall that a rooted binary tree $T$ can be decomposed into its two maximal pending subtrees $T_a$ and $T_b$ rooted at the direct descendants $a$ and $b$ of $\rho$, and we denote this by $T=(T_a,T_b)$. \red{In the unrooted setting, this so-called \emph{standard decomposition} of $T$ into the maximal subtrees pending from the root may result in more than two trees.}

\red{As we will later on transfer some findings from graph theoretical trees back to research areas like mathematical phylogenetics as well as theoretical computer science, we need two more notions of trees: \begin{itemize}\item A rooted (binary or non-binary) \emph{phylogenetic} $X$-tree is a rooted (binary or non-binary) tree without degree-2 vertices other than the root and whose leaves are bijectively labeled by some taxon set $X$. We may typically assume without loss of generality that $X=\{1,\ldots,n\}$, in which case we simply talk about phylogenetic trees with $n$ leaves rather than phylogenetic $X$-trees. \item A rooted \emph{ordered} tree is a rooted (binary or non-binary) tree without degree-2 vertices other than the root and which has the property that for all inner nodes $u$, its descendants are ordered, e.g. from \enquote{left} to \enquote{right}. So in the ordered setting, if we for instance swap the left and the right maximal pending subtrees of $T_2$ in Figure \ref{6taxabalanced}, this would change the ordered version of $T_2$, whereas $T_2$ as a mere tree would remain unchanged.  \end{itemize} }

Last but not least, we want to introduce \red{three} particular trees which play a crucial role in this manuscript, namely the so-called {\em caterpillar tree} $T^{cat}_n$, the so-called {\em fully balanced tree} $T^{bal}_k$ \red{and the {\em star tree } $T^{star}_n$}, respectively. $T^{cat}_n$ denotes the unique rooted binary tree with $n$ leaves that has only one cherry, while $T^{bal}_k$ denotes the unique rooted binary tree with $n=2^k$ leaves in which all leaves have depth precisely $k$. \red{$T^{star}_n$ denotes the unique tree on $n$ leaves in which all leaves are directly pending from the root, i.e. in which all leaves have distance 1 to the root.} Whenever there is no ambiguity concerning $n$ or $k$, we also write $T^{cat}$, $T^{bal}$ and $T^{star}$ instead of $T^{cat}_n$, $T^{bal}_k$ and $T^{star}_n$. $T^{cat}_4$ and $T^{bal}_2$ are depicted in the bottom row of Figure \ref{yule}. Note that without loss of generality, the unique rooted binary tree with only one leaf, which consists of only one node and no edges, is defined to be $T^{cat}_1$, and it is thus the only caterpillar tree which does not contain a cherry. This tree is at the same time equal to $T^{bal}_0$, i.e. the fully balanced tree of height 0 and it is also equal to $T^{star}_1$. This technicality enables inductive proofs concerning $T^{cat} $, $T^{bal}$ and $T^{star}_1$to start at $n=1$. 

We are now in a position to define the central concept of this manuscript, namely the Sackin index. As there are four different versions of this index to be found in the literature, we will define all of them first and subsequently investigate their respective relationships. Note, however, that we focus on the first three definitions in the present manuscript, which can be shown to be equivalent. Therefore, we will not refer to the last definition as Sackin index, but give it a modified name instead.

\begin{definition}\label{sackin} \cite{cophenetic,steel2016}
The {\em Sackin index} of a rooted tree $T$ is defined as $\mathcal{S}(T)=\sum\limits_{u \in \mathring{V}(T)}n_u$, where $n_u$ denotes the number of leaves in the subtree of $T$ rooted at $u$. 
\end{definition}

Note that in the following, whenever we have for two trees $T_1$ and $T_2$ that $\mathcal{S}(T_1) < \mathcal{S}(T_2)$, then $T_1$ is called {\em more balanced} than $T_2$.

\begin{definition}\label{sackinalternative} \cite{cophenetic,steel2016}
The {\em Sackin index} of a rooted tree $T$ is defined as $\bar{\mathcal{S}}(T)=\sum\limits_{x \in V^1(T)}\delta_x$.
\end{definition}

\red{\begin{remark} The notion of the Sackin index given by Definition \ref{sackinalternative}, i.e. the sum of all path lengths from the root to any leaf (\enquote{external node}) of the tree, is known as \emph{total external path length} in computer science \cite{knuth1, cameron, klein,nievergelt}, where it is typically applied to ordered search trees. We will consider these more in-depth in Section \ref{sec_ordered}.
\end{remark}}

\begin{definition}\citep[(3.10)]{steel2016}\label{mike}
The {\em Sackin index} of a rooted tree $T$ with root $\rho$ and vertex set $V(T)$ is defined as $\widetilde{\mathcal{S}}(T)=\sum\limits_{u \in V(T)\setminus\{\rho\}} n_u$, where $n_u$ denotes the number of leaves in the subtree of $T$ rooted at $u$.  
\end{definition}

Note that in the original paper by Sackin \citep{sackin}, in fact no index is defined at all. Instead, a sequence $b$ of  leaf depths is defined, which implies that Definition \ref{sackinalternative} is probably most closely related to what Sackin originally intended. However, we will show in Lemma \ref{equidefs} that the first three definitions are in fact equivalent, which does not hold for the following definition.

\begin{definition}\label{noah}\cite{noah}
The {\em normalized Sackin index} of a rooted tree $T$ with $n$ leaves is defined as $\widehat{\mathcal{S}}(T)=\frac{1}{n}\sum\limits_{x \in V^1(T)}\delta_x$, where $\delta_x$ denotes the depth of leaf $x$. 
\end{definition}

We now state a first lemma, which has already been partially stated (albeit without proof) in the literature  \cite{cophenetic}. 
 
 \begin{lemma}\label{equidefs} Definitions \ref{sackin}, \ref{sackinalternative} and \ref{mike} are equivalent, i.e. for any rooted binary tree $T$, we have $\mathcal{S}(T)=\bar{\mathcal{S}}(T)=\widetilde{\mathcal{S}}(T)$.
 \end{lemma}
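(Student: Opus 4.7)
The plan is to prove the two equalities $\mathcal{S}(T)=\bar{\mathcal{S}}(T)$ and $\mathcal{S}(T)=\widetilde{\mathcal{S}}(T)$ separately, both by a simple double-counting argument, after briefly disposing of the degenerate case $|V^1|=1$ (where all three sums are empty and hence equal to $0$). From now on assume $T$ has at least one internal node.

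For $\mathcal{S}(T)=\bar{\mathcal{S}}(T)$, I would set up the bipartite incidence relation $\{(u,x)\in\mathring{V}\times V^1 : u\text{ is an ancestor of }x\}$ and count it in two ways. Summing first over $u\in\mathring{V}$ and counting its leaf-descendants gives $\sum_{u\in\mathring V} n_u=\mathcal{S}(T)$ by Definition~\ref{sackin}. Summing first over $x\in V^1$ amounts to counting, for each leaf, its internal ancestors; these are precisely the $\delta_x$ nodes on the directed path from $\rho$ to $x$ other than $x$ itself, and every such node is internal (it has $x$ as a descendant, hence has degree $\geq 2$). Thus the double sum equals $\sum_{x\in V^1}\delta_x=\bar{\mathcal{S}}(T)$, proving the first equality.

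For $\mathcal{S}(T)=\widetilde{\mathcal{S}}(T)$, I would split the index set $V\setminus\{\rho\}$ in Definition~\ref{mike} into its internal part $\mathring V\setminus\{\rho\}$ and its leaf part $V^1$ (noting that $\rho\notin V^1$ once $T$ has an internal node). Since $n_x=1$ for every leaf $x$, this yields
\[
\widetilde{\mathcal{S}}(T)=\sum_{u\in\mathring V\setminus\{\rho\}} n_u + \sum_{x\in V^1} 1 = \sum_{u\in\mathring V\setminus\{\rho\}} n_u + n.
\]
On the other hand, isolating the root contribution in Definition~\ref{sackin} gives $\mathcal{S}(T)=n_\rho+\sum_{u\in\mathring V\setminus\{\rho\}}n_u = n+\sum_{u\in\mathring V\setminus\{\rho\}}n_u$, which matches the expression above.

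I do not expect any serious obstacle here; the only subtlety is the bookkeeping around the root and the trivial one-node tree (where $\rho$ is simultaneously the unique leaf, so $\mathring V=\emptyset$ and $V\setminus\{\rho\}=\emptyset$). Once one observes that in this case all three sums are empty and $\delta_\rho=0$, the argument goes through uniformly. A slightly slicker alternative would be a single induction on the decomposition $T=(T_a,T_b)$ using the identity $\mathcal{S}(T)=\mathcal{S}(T_a)+\mathcal{S}(T_b)+n$ (and the analogous identities for $\bar{\mathcal{S}}$ and $\widetilde{\mathcal{S}}$), but the direct double-counting proof seems cleaner and more transparent.
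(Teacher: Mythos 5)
Your proposal is correct and follows essentially the same route as the paper's proof: the first equality via double-counting the ancestor--descendant incidences between internal nodes and leaves, and the second via the observation that $\mathring{V}$ and $V\setminus\{\rho\}$ differ only by trading the root (contributing $n_\rho=n$) for the $n$ leaves (each contributing $n_x=1$). Your explicit handling of the one-node tree is a small additional care the paper omits, but the substance is identical.
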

 
\red{ \begin{remark} Before we prove Lemma \ref{equidefs}, we want to state that this lemma is actually crucial in order to transfer findings on $\bar{\mathcal{S}}$, which can be derived from theoretical computer science, to the setting of mathematical phylogenetics, for instance, where the definition given by $\mathcal{S}$ is more common. \end{remark}}

 \begin{proof} 
 
We first prove $\mathcal{S}(T)=\bar{\mathcal{S}}(T)$. Therefore, consider $\bar{\mathcal{S}}(T)=\sum\limits_{x \in V^1(T)}\delta_x$. Note that while $\delta_x$ by definition denotes the number of edges separating leaf $x$ from the root $\rho$ of $T$, this is equivalent to the number of internal nodes on the path from $\rho$ to $x$ (including $\rho$). This leads to: $$\bar{\mathcal{S}}(T)=\sum\limits_{x \in V^1(T)}\delta_x= \sum\limits_{x \in V^1(T)} |\{u \in \mathring{V}(T): \mbox{$u$ is an ancestor of $x$}\}|$$ 

$$ = |\{(x,u): \mbox{ $x \in V^1(T)$, $u \in \mathring{V}(T)$ and $u$ is an ancestor of $x$}\}|$$
$$ = |\{(x,u): \mbox{ $x \in V^1(T)$, $u \in \mathring{V}(T)$ and $x$ is a descendant of $u$}\}|$$
$$= \sum\limits_{u \in \mathring{V}(T)} |\{x \in V^1(T): \mbox{$x$ is a descendant of $u$}\}| =\sum\limits_{u \in \mathring{V}(T)} n_u=\mathcal{S}(T).$$
So now we have $\mathcal{S}(T)=\bar{\mathcal{S}}(T)$, and next we show that $\mathcal{S}(T)=\widetilde{\mathcal{S}}(T)$. We have $$\mathcal{S}(T) = \sum\limits_{u \in \mathring{V}(T)}n_u = \sum\limits_{u \in V(T)}n_u - \sum\limits_{x \in V^1(T)}n_x = \left(\sum\limits_{u \in V(T)}n_u \right)-n.$$ The latter equality is due to the fact that the only leaf belonging to a subtree rooted at a leaf is the leaf itself, so as we have $n$ leaves, this gives $n$ summands that contribute 1 to the sum. Now recall that $n_\rho = n$, so this leads to 
$$\mathcal{S}(T) =\left( \sum\limits_{u \in V(T)}n_u\right) -n_\rho = \sum\limits_{u \in V(T)\setminus\{\rho\}}n_u=\widetilde{\mathcal{S}}(T).$$ This completes the proof.

 \end{proof}
 
So because Definitions \ref{sackin}, \ref{sackinalternative} and \ref{mike} are equivalent, we do not have to distinguish between them and will use them interchangeably. In fact, we will focus in this manuscript mainly on second definition. 

The normalized Sackin index, however, is a modification of the Sackin index whenever trees with different numbers of leaves are considered, because the ranking induced by the normalized Sackin index can even reverse the ranking induced by the Sackin index. For instance, consider the two trees $T_1=T_{37}^{cat}$, i.e.  the caterpillar tree with 37 leaves, and $T_2=T_{9}^{bal}$ the fully balanced tree with $2^9=512$ leaves. Then, it can easily be verified that we have $\mathcal{S}(T_1)=702 <  4608=\mathcal{S}(T_2)$, but $\widehat{\mathcal{S}}(T_1)\approx 18.97 > 9 =\widehat{\mathcal{S}}(T_2)$.\footnote{Note that these numbers can also be verified later on by using Theorems \ref{thm_sackinCat} and \ref{thm_twodepthsminNEW}.} In fact, this ranking modification is no artifact but the very purpose of the normalization: The effect that many leaves automatically may lead to more `imbalance' shall be eliminated. So in fact, $\mathcal{S}$ and $\widehat{\mathcal{S}}$ can be very different -- but only when different leaf numbers are considered! As long as $n$ is fixed, the induced rankings of the two indices are of course equivalent, and in this case, $\widehat{\mathcal{S}}$ is just $\mathcal{S}$ divided by the constant factor $n$. So when we discuss for instance the question how many trees with $n$ leaves exist that have maximal or minimal Sackin index, the answers for $\mathcal{S}$ and $\widehat{\mathcal{S}}$ will be the same. 

Therefore, as this is sufficient for the numbers of minima and maxima, we focus in this manuscript on Definition \ref{sackinalternative}.

\par\vspace{0.5cm}
\color{black}
However, before we can proceed with the results concerning the Sackin index, we need to introduce one more concept, which is closely related to Definition \ref{sackinalternative}.

\begin{definition}\label{def_int} Let $T$ be a rooted tree with $n$ leaves. Then, we set $$i(T):=\sum\limits_{v \in \mathring{V}(T)} \delta_v$$. $i(T)$ is then called the \emph{total internal path length} of $T$. 
\end{definition}

The following simple lemma has been stated already in \cite[p.400]{knuth1} (using different notation), but without a formal proof, so that we briefly prove it before we turn our attention to the extremal values of the Sackin index.

\begin{lemma} \label{lem_inout}
Let $T$ be a rooted binary tree with $n$ leaves. Then, we have $\mathcal{S}(T) = i(T) + 2(n-1)$.
\end{lemma}

\begin{proof} We prove the statement by induction on $n$. If $n=1$, $T$ consists only of one node, which is a leaf and has depth 0. So $\mathcal{S}(T)=0$. In this case, there is no inner vertex, so the sum in the definition of $i(T)$ is empty, so $i(T)=0$. So indeed we have $\mathcal{S}(T) =0= i(T)=i(T) + 2(1-1)=i(T) + 2(n-1)$, which completes the base case of the induction. Next, assume the statement holds for all trees with up to $n$ leaves and consider $T$ with $n+1\geq 2$ leaves. As every rooted binary tree on at least two leaves has at least one cherry, we can construct a tree $T'$ by deleting the leaves of a cherry $[x,y]$ of $T$, which turns its parent, say $v$, into a leaf. By the inductive hypothesis, we then have $\mathcal{S}(T') = i(T') + 2(n-1)$. Moreover, by the construction of $T'$, we have $i(T')=i(T)-\delta_v$, because $v$ is a leaf in $T'$ but an inner vertex in $T$. Moreover, we have $\mathcal{S}(T) =\mathcal{S}(T') +\delta_x+\delta_y-\delta_v$, as $x$ and $y$ are leaves in $T$ but not in $T'$, and as $v$ is a leaf in $T'$ but an inner node in $T$. Using $\delta_x=\delta_y=\delta_v+1$, this immediately shows that $\mathcal{S}(T) =\mathcal{S}(T') +\delta_v+2$. Using the inductive hypothesis, this leads to $\mathcal{S}(T) = i(T') + 2(n-1)+\delta_v+2=(i(T)-\delta_v)+2(n-1)+\delta_v+2=i(T)+2n=i(T)+2((n+1)-1)$. This completes the proof.
\end{proof}

\color{black}
\section{Results}
 
\subsection{Minimally and maximally balanced binary trees} \label{sec_binary}
\red{It is one of the main aims of this manuscript to count trees with minimal and maximal Sackin index, respectively, and in order to do so, we need to characterize such trees. In the present section, we focus on binary trees and for these, we will start with the easier case, which is the maximum. Afterwards we will consider the more involved and therefore more interesting case of the minimum.} 

\subsubsection{Maximally imbalanced binary trees / Minimally balanced binary trees} \label{sec_binmax}\par

In this subsection, we prove that $\mathcal{S}(T)$ is uniquely maximized by $T=T_n^{cat}$ for binary trees, and we also explicitly state the value of $\mathcal{S}(T_n^{cat})$, which is maximal. So we will show that for all values of $n$, $T^{cat}_n$ is the unique tree maximizing $\mathcal{S}$, i.e. the unique most imbalanced tree. This result has been stated in the literature before, e.g. in \cite{cophenetic} \red{as well as for ordered trees in \cite[p. 400]{knuth1}}, but so far, a formal proof has not been stated anywhere.

%
\begin{theorem} \label{thm_sackinCat}Let $T$ be a rooted binary tree with $n$ leaves and maximal Sackin index, i.e. we have $\mathcal{S}(T)\geq \mathcal{S}(T')$ for all rooted binary trees $T'$ on $n$ leaves. Then, $T$ is a caterpillar, i.e. $T=T_n^{cat}$. In other words, $T_n^{cat}$ is the unique binary tree maximizing $\mathcal{S}$. Moreover, we have $\mathcal{S}(T_n^{cat})=\frac{n\cdot(n+1)}{2}-1$.
\end{theorem}

\begin{proof} \begin{enumerate}
\item We begin by proving that the caterpillar tree is the unique binary tree maximizing $\mathcal{S}$. Assume this is not the case, i.e. assume there is a tree $T$ that is \emph{not} a caterpillar and that maximizes $\mathcal{S}$, i.e. $\mathcal{S}(T)\geq \mathcal{S}(T')$ for all rooted binary trees $T'$ on $n$ leaves. As $T$ is not a caterpillar, $T$ by definition has at least two cherries. Let $[x,y]$ be a cherry of $T$ of maximum depth, and let $[u,v]$ with parent $w$ be any other cherry of $T$. Then we have: $\delta_x=\delta_y\geq \delta_u=\delta_v=\delta_w+1$. Now we construct a tree $\widetilde{T}$ from $T$ as follows: We delete leaves $u$ and $v$ and the edges leading to these leaves, respectively, and we attach two new leaves $u'$ and $v'$ to leaf $x$. This implies that in total, $\widetilde{T}$ has three leaves that $T$ does not have, namely $u'$ and $v'$, but also $w$, which is not a leaf in $T$ but which is a leaf in $\widetilde{T}$, as $u$ and $v$ have been deleted. However, $T$ also has three leaves that $\widetilde{T}$ does not have, namely $u$ and $v$, which have been deleted to get to  $\widetilde{T}$, but also $x$, which is an internal node in $\widetilde{T}$ as $u'$ and $v'$ have been attached to it. In summary, this leads to:

$$\mathcal{S}(\widetilde{T})=\mathcal{S}(T)-\delta_u-\delta_v-\delta_x+\delta_{u'}+ \delta_{v'}+\delta_w.$$

Using $\delta_{u'}=\delta_{v'}=\delta_x+1$ and $\delta_{w}=\delta_{u}-1=\delta_v-1$, this becomes: 

$$\mathcal{S}(\widetilde{T})=\mathcal{S}(T)-\delta_u-\delta_u-\delta_x+(\delta_x+1)+(\delta_{x}+1)+ (\delta_{u}-1)$$ $$=\mathcal{S}(T)+\underbrace{\delta_x-\delta_u}_{\stackrel{\geq 0} {\mbox{\tiny as $x$ has max. depth}}}+1 > \mathcal{S}(T).$$

So $\mathcal{S}(\widetilde{T}) >\mathcal{S}(T) $. Clearly, this contradicts the maximality of $T$, which shows that the assumption was wrong. Thus, the caterpillar is the unique binary tree maximizing $\mathcal{S}$. This completes the first part of the proof.

\item Note that the caterpillar $T_n^{cat}$ on $n$ leaves has one leaf of depths 1 to $n-2$  each, but it has precisely two leaves of depth $n-1$, namely the two leaves in its only cherry. Thus, we have $\mathcal{S}(T_n^{cat}) = \sum\limits_{1}^{n-1}i + (n-1)$. Using the Gaussian sum, this immediately gives  $\mathcal{S}(T_n^{cat}) = \frac{(n-1)n}{2}+(n-1)=\frac{n\cdot(n+1)}{2}-1$. This completes the proof.
\end{enumerate}

\end{proof}

In total, we conclude that the caterpillar is indeed the unique binary tree maximizing the Sackin index, i.e. the caterpillar is the unique most imbalanced binary tree. So for all $n$, there is precisely one binary tree with maximal Sackin index. As we will show in Section \ref{sec_binmin}, the situation is entirely different for the minimal Sackin index in the binary case, as it can be taken on by various trees (depending on $n$). Moreover, we will show in Section \ref{sec_nonbin} that the binary caterpillar remains the unique tree maximizing the Sackin index even if we do not restrict the maximization to binary trees.

\par\vspace{0.5cm}
We conclude this subsection by noting that the sequence of maximal Sackin values $(a_n)_{n \in \mathbb{N}_{\geq 1}}$ with $a_n=\mathcal{S}(T^{cat}_n)=\frac{n(n+1)}{2}-1$ for $i \in \mathbb{N}\geq 1$, which starts with 2, 5, 9, 14, 20, 27, 35, 44, 54, 65, 77, 90, 104, 119, 135, 152, 170, 189, 209, $\ldots$, corresponds to sequence A000096 in the Online Encyclopedia of Integer Sequences OEIS \cite[Sequence A000096]{OEIS}, when the index is shifted by 1 (i.e. the $i$\textsuperscript{th} entry of the OEIS sequence corresponds to the $(i+1)$\textsuperscript{st} entry of our sequence). So this sequence has already occurred in other contexts, which might link the maximal Sackin index to other areas of research like the study of prime polyominoes or the traveling salesman polytope \cite[Sequence A000096]{OEIS}.

\subsubsection{Maximally balanced binary trees} \par

In this subsection, we first want to establish the same results for binary trees with minimal Sackin index that the previous section stated for trees with maximal Sackin index. In particular, we want the minimum value of the Sackin index and we want to characterize the trees that achieve it. However, it turns out that -- as opposed to the previous section -- the case of minimality is far more involved. \red{Some approaches concerning the minimum Sackin value can be found in the literature on rooted binary trees -- for instance in the appendix of \cite{shao}, where a somewhat complicated (and unfortunately erroneous) attempt at calculating this minimum value was made\footnote{\red{The definition of the function $F(t,i)$ presented in \cite{shao} should actually be called $F(t,i,j)$, as it depends on $j$, too, and for the last summand, i.e. for $j=m_i$, the definition of this function does not coincide with the (correct) example presented at the very end of that manuscript. Thus, the values calculated by Equation $(A1)$ in that manuscript do \emph{not} coincide with the true minimal Sackin values, e.g. for $n=11$.}}, albeit without a formal proof. 
Independently, \emph{ordered} binary trees with minimum internal path length have already been characterized in \cite[p. 401]{knuth1}, which can be used to derive a formula for the minimal possible Sackin index for $n$ leaves. We show how to do so in the proof of Theorem \ref{thm_twodepthsminNEW}.} 
\par \vspace{0.5cm}
The reason why the minimum is more involved than the maximum is that, depending on the number $n$ of leaves, the binary tree with minimal Sackin index need not be unique, so counting these trees is more complicated. Note that while examples for the fact that the Sackin index can be minimized by more than one tree have been presented before, e.g. in \cite{cophenetic}, see Figure \ref{6taxabalanced}, it has so far not been investigated for which values of $n$ this happens and precisely how many minima there are for each $n$. \red{It is the main aim of this subsection to characterize all binary Sackin minimal trees and to present a recursive approach to count them.}

\begin{figure}
	\centering
	\includegraphics[scale=.4]{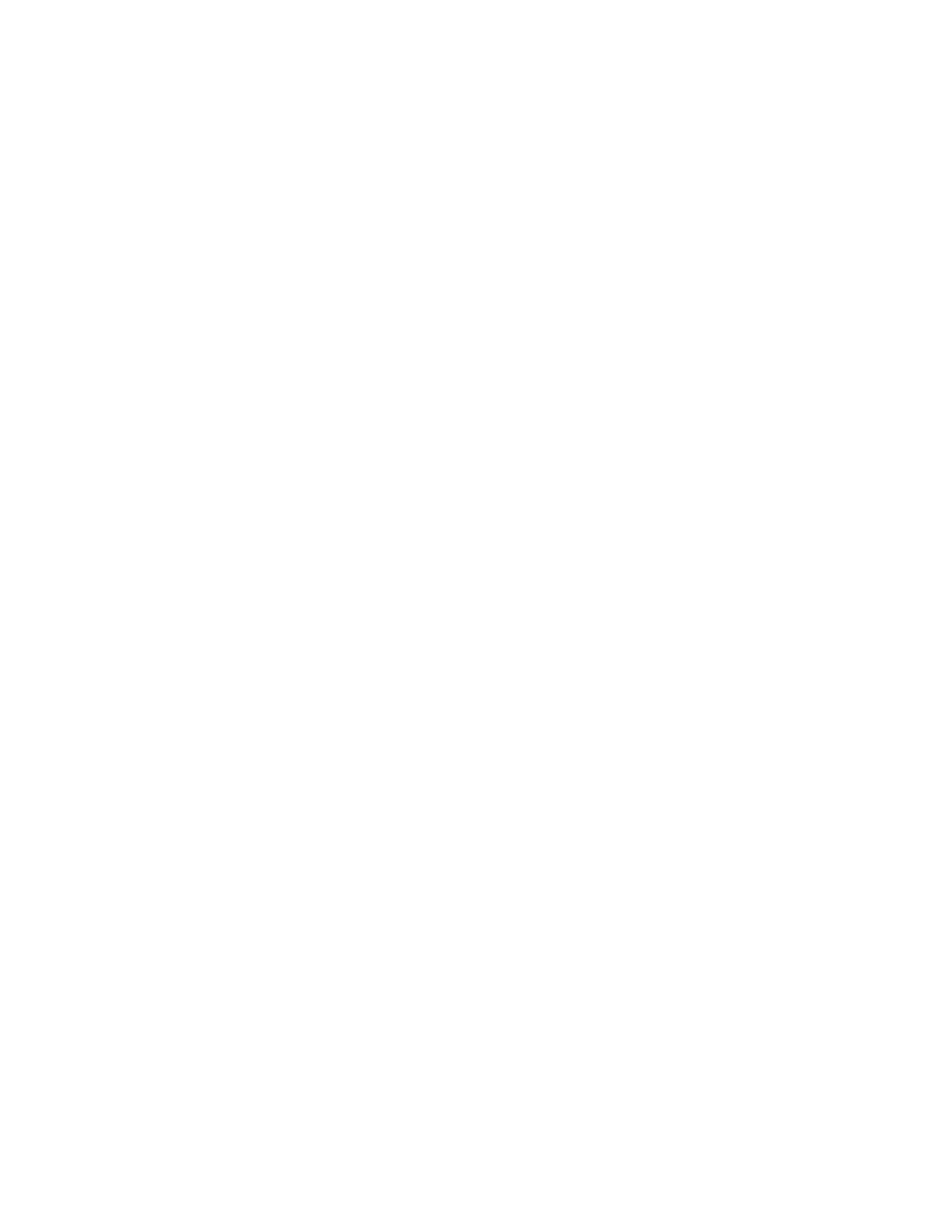}
	\caption{Two trees with $n=6$ leaves which both have Sackin index 16, which can be verified to be minimal for $n=6$ by using Theorem \ref{thm_twodepthsminNEW}, as both of them only employ leaves of depths 2 and 3.
	}
	\label{6taxabalanced}
\end{figure}

\par \vspace{.5cm}
We now state  the first theorem of this subsection, which gives a full characterization of all rooted binary trees with minimum Sackin index and also provides an explicit formula to calculate this minimum value. \color{black} Note that a variation of the statement of this  theorem can also be found, for instance, in \cite{cameron}, where it is attributed to Knuth \cite{knuth3}. Indeed, as we will outline in the proof, Knuth stated the main ideas underlying this assertion, but not in \cite{knuth3}. Instead, these ideas can be found in \cite[pp. 400--401]{knuth1}.

\newpage
\begin{theorem}\label{thm_twodepthsminNEW}
Let $T$ be a rooted binary tree on $n$ leaves and let $k=\lceil \log_2(n)\rceil$. Then, $T$ minimizes the Sackin index amongst all such trees if and only if $T$ equals $T_k^{fb}$ (for $n=2^k$) or $T$ employs precisely two leaf depths, namely $k-1$ and $k$ (for $n <2^k$). Moreover, if $T$ has minimal Sackin index, we have $\mathcal{S}(T)=-2^k+n(k+1)$ (which equals $k\cdot 2^k$ if $n=2^k$).
\end{theorem}

\begin{proof} First, note that by Lemma \ref{lem_inout}, it suffices to analyze the total \emph{internal} path length $i(T)$ instead of $\mathcal{S}(T)$. And as explained already in \cite{knuth1}, $i(T)$ is clearly minimal if and only if every internal node is as close to the root as possible, but in a rooted binary tree, there are at most $2^m$ vertices at distance $m$ to the root for all $m=1,\ldots,k$. Now, as we have $n\in \{2^{k-1}+1,\ldots,2^k\}$ leaves, we know that (as $T$ is binary) we have $n-1$ inner vertices with $2^{k-1}\leq n-1 \leq 2^k-1$. This implies that in a tree which minimizes the total internal path length, the \emph{inner} vertices form a subtree $T'$ which consists of $T_{k-2}^{fb}$, which has $2^{k-1}-1$ vertices (and the \enquote{leaves} of which have distance $k-2$ from the root), with $(n-1)-(2^{k-1}-1)=n-2^{k-1}$ extra vertices attached at distance $k-1$ to the root. Note that some of the vertices at distance $k-2$ from the root in $T'$ may be degree-2 vertices. So if $T$ contains such a subtree $T'$ of inner vertices, the leaves of $T$ must be attached such that all degree-2 vertices of $T'$ are attached to one leaf of $T$, and all leaves of $T'$ are attached to two leaves of $T$ (i.e. leaves of $T'$ lead to cherries in $T$). Note that this implies that all leaf depths in $T$ are $k-2+1=k-1$ or $k-1+1=k$, respectively, and the number of leaves at distance $k$ equals $2\cdot n-2^{k-1}=2n-2^k$. The latter assertion is due to the fact that only the extra vertices of depth $k-1$ in $T'$ can lead to depth-$k$ vertices in $T$, but they are leaves in $T'$, so each of them must be the parent of a cherry in $T$ (which is why each extra vertex leads to two vertices of maximal depth in $T$). 

In case that $n=2^k$, by the reasoning above we have that there are $2n-2^k=2^{k+1}-2^k=2^k=n$ many leaves at distance $k$ to the root, so all leaves have distance $k$, which implies that $T$ equals $T_k^{fb}$.

It only remains to show that $\mathcal{S}(T)=-2^k+n(k+1)$. But this is now obvious by Definition \ref{sackinalternative}, as we now know that a tree with $n$ leaves and minimum Sackin index hhas $2n-2^k$ leaves at depth $k$ and $n-(2n-2^k)=2^k-n$ many leaves at depth $k-1$. This implies that $\mathcal{S}(T)=(2n-2^k)k + (2^k-n)(k-1)=-2^k+n(k+1)$. Note that in case that $n=2^k$, this implies $\mathcal{S}(T)=-2^k+n(k+1)=-2^k+2^k(k+1)=k\cdot 2^k$.This completes the proof.
\end{proof}

Before we continue, we briefly turn our attention to theoretical computer science.

\begin{remark} \label{rem_knuth_minima} In \cite[pp. 400--401]{knuth1}, the author proves an explicit formula for the minimum total \emph{internal} path length for rooted binary ordered trees with $m$ internal nodes, i.e. for the sum of all depths of the $m$ internal nodes: this number equals $(m+1)q-2^{q-1}+2$, where $q=\lfloor \log_2(m+1)\rfloor$. Using  Lemma \ref{lem_inout}, it can be easily shown that the value stated in \cite{knuth1} equals $\mathcal{S}(T)=-2^k+n(k+1)$, so this leads to an alternative proof for the minimum value stated in Theorem \ref{thm_twodepthsminNEW}. Moreover, it is stated there that the \enquote{optimum value ... is clearly achieved in a tree that looks like this}, followed by an example for $m=12$ and thus with $n=13$ and $k=4$. This example is a tree with 13 leaves, 3 of which have depth $3=k-1$ and 10 of which have depth $4=k$ -- so it is $T_3^{fb}$ in which 5 leaves have been replaced by cherries.

Note that while the author is explicitly referring to ordered trees \cite[p. 309]{knuth1}, it is obvious that the Sackin value is independent of the ordering. Thus, Theorem \ref{thm_twodepthsminNEW} can be derived from the groundbreaking work by Knuth in the described manner. However, as we will see in Section \ref{sec_ordered}, the ordering does indeed make quite a difference when the number of Sackin minimal trees shall be counted, as there are far more ordered ones than non-ordered ones (even though the ordered ones can be counted more easily). \end{remark}

\color{black}

Next, we consider the sequence of minimal Sackin values.

\begin{remark} Note that Theorem \ref{thm_twodepthsminNEW} implies that the sequence of minimal values of the Sackin index, starting at $n=1$, is $1,2,5,8,12,16,20,24,\ldots$ This corresponds to Sequence A003314 in the Online Encyclopedia of Integer Sequences OEIS \cite[Sequence A003314]{OEIS}, which is also often referred to as binary entropy function, which interestingly links the Sackin index to the areas of information theory \cite{shannon} and thermodynamics \cite{landauer}. 
\end{remark}

We will now turn our attention to the set of trees of minimal Sackin index, i.e. the maximally balanced ones according to this index. As we have seen in Theorem \ref{thm_twodepthsminNEW}, if the number of leaves is a power of 2, i.e. $n=2^k$, the maximally balanced tree is unique, namely $T_k^{bal}$. While this had previously been observed in the literature (cf. \citep{heard,shao}), no statement on the number of maximally balanced trees (ordered or not) for leaf numbers that are not a power of 2 has been made in the literature so far, even though it has long been known that need not be unique for all values of $n$. In fact, already for $n=6$, there are two minima, which are depicted in Figure \ref{6taxabalanced}. As stated before, this example is not new; it can for instance  already be found in \cite{cophenetic}. However, so far the explicit number of Sackin minima for $n \neq 2^k$ has not been investigated. In Theorem \ref{recursion}, we will provide a recursive formula to calculate this number in the following, and we will exploit the full characterization of Sackin minimal trees provided by Theorem \ref{thm_twodepthsminNEW} in order to do so.  

Counting the symmetries that may occur when some leaves of $T_{k-1}^{fb}$ are replaced by cherries is not trivial. So in order to conclude this section, we will show in the following that the number of trees with minimal Sackin index can be recursively counted. The proof of this formula exploits Theorem \ref{thm_twodepthsminNEW}. 

We now state the main result of this section.

\begin{theorem} \label{recursion} Let $s(n)$ denote the number of binary rooted trees with $n$ leaves and with minimal Sackin index and let $k = \lceil \log_2(n)\rceil$. In the following, we consider partitions of $n$ into two integers $n_a$, $n_b$, i.e. $n=n_a+n_b$. Moreover, let $f(n)=\begin{cases}0 & \mbox{if $n$ is odd} \\ {s(\frac{n}{2})+1 \choose 2}& \mbox{else. }\end{cases}$ \\ \par\vspace{0.3cm}Then, the following recursion holds: 
\begin{itemize}
\item $s(1)=1$
\item $s(n)=  \sum\limits_{\substack{(n_a,n_b): \\n_a+n_b=n,\\ \frac{n}{2}< n_a\leq 2^{k-1} ,\\ n_b\geq 2^{k-2} }}  s(n_a)\cdot s(n_b) +f(n)$ for all $n>1$.
\end{itemize}

\end{theorem}

\begin{proof} First consider $n=1$. In this case, it is clear that there is only one rooted binary tree, namely the one consisting of only one node, which therefore has minimal Sackin index, which implies $s(1)=1$.

Now let $n>1$. In this case, $n$ can be partitioned into two summands $n_a$ and $n_b$ such that $n_a \geq n_b$ and thus in particular $n_a\geq \frac{n}{2}$. We first consider the case where $n_a \neq n_b$. 

Note that a Sackin minimal rooted binary tree $T$ with at least 2 leaves can be decomposed into its two maximal pending subtrees $T_a$ and $T_b$, both of which must be Sackin minimal, too (otherwise we could replace the non-minimal one by a minimal one on the same number of leaves and thus achieve a tree with a lower Sackin index, which would contradict the minimality of $T$). So we need to count the number of such combinations, where $T_a$ has $n_a$ leaves and $T_b$ has $n_b$ many leaves. 

We first consider the trees for which $n_a > \frac{n}{2}$ and thus $n_a > n_b$. As we only consider Sackin minimal trees, we know by Theorem \ref{thm_twodepthsminNEW} that each such tree can be constructed by taking $T_{k-1}^{fb}$ and replacing $n-2^{k-1}$ leaves with cherries. Note that if in $T_b$ no leaves are replaced by cherries, $T_b$ equals tree $T_{k-2}^{fb}$, in which case $n_b=2^{k-2}$;  in all other cases, $n_b>2^{k-2}$. This explains why we only consider cases with $n_b \geq 2^{k-2}$. Again by considering Theorem \ref{thm_twodepthsminNEW}, we can conclude that if \emph{all} leaves in $T_a$ are replaced by cherries, $T_a$ equals $T_{k-1}^{fb}$ and thus has $2^{k-1}$ leaves. Otherwise, $T_a$ has fewer leaves. This leads to the requirement $n_a \leq 2^{k-1}$.
So all $s(n_a)\cdot s(n_b)$ such combinations of Sackin minimal trees on $n_a$ and $n_b$ leaves (where $\frac{n}{2} \leq n_a\leq 2^{k-1}$ and $n_b \geq 2^{k-2}$ and $n_a+n_b=n$, respectively) lead to different Sackin minimal trees on $n$ leaves (counting trees twice has been prevented by the restriction $n_a\geq \frac{n}{2}$). Clearly, this consideration recovers all trees with $n_a>n_b$ that have the structure described in  Theorem \ref{thm_twodepthsminNEW} and thus have minimal Sackin index. This explains the first summand in the recursive formula.

Now consider the case $n_a=n_b=\frac{n}{2}$, which of course only needs to be considered if $n$ is even (which explains why $f(n)=0$ if $n$ is odd). In this case, if we consider all $s(n_a)\cdot s(n_b)=s^2\left(\frac{n}{2}\right)$ combinations of Sackin minimal trees on $\frac{n}{2}$ leaves, then due to symmetry, the ones where $T_a$ and $T_b$ are not isomorphic will be counted twice, but not the ones where $T_a$ and $T_b$ are isomorphic. So we first have to add once more the $s \left(\frac{n}{2}\right)$ trees where $T_a$ and $T_b$ are isomorphic. So now we are in total considering $s^2\left(\frac{n}{2}\right)+s \left(\frac{n}{2}\right)$ many trees, but each tree now occurs two times. So the number of distinct Sackin minimal trees with $n_a=n_b$ equals $$\frac{1}{2}\cdot \left(s^2\left(\frac{n}{2}\right)+s \left(\frac{n}{2}\right)\right)={s \left(\frac{n}{2}\right)+1 \choose 2},$$ which explains the second summand in the recursion and thus completes the proof.
\end{proof}

We can use Theorem \ref{recursion} to derive the following corollary, which characterizes all cases in which the binary tree with minimal Sackin index is actually unique.  

\begin{corollary}\label{uniquesackinminima} Let $n\in \mathbb{N}$. Then, there is only one rooted binary tree $T$ with minimum Sackin index if and only if there exists an $m \in \mathbb{N}$ such that $n\in \{2^m-1,2^m,2^m+1\}$.
\end{corollary}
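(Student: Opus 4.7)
The plan is to prove both directions using Corollary \ref{optalg1}, which identifies the min-Sackin trees on $n$ leaves with the possible outputs of Algorithm \ref{alg1}. Throughout, write $k = \lceil \log_2 n \rceil$.

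For the ``$\Leftarrow$'' direction, the case $n = 2^m$ is already Theorem \ref{TkbalUNIQUE}. For $n = 2^m - 1$ and $n = 2^m + 1$, I would induct on $m$ and use the standard decomposition $T = (T_a, T_b)$ with $n_a \geq n_b$, together with the height-and-cherry characterization of Theorem \ref{SackinMinCharacterization}. When $n = 2^m - 1$, the constraint $h_{T_a}, h_{T_b} \leq m - 1$ forces $n_a, n_b \leq 2^{m-1}$, and then $n_a + n_b = 2^m - 1$ leaves only the split $(n_a, n_b) = (2^{m-1}, 2^{m-1}-1)$. When $n = 2^m + 1$, the unique max-depth cherry (of depth $m+1$) must lie in exactly one of the subtrees; applying Theorem \ref{SackinMinCharacterization} inside that subtree pins its leaf count to $2^{m-1}+1$ and hence forces $(n_a, n_b) = (2^{m-1}+1, 2^{m-1})$. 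In both cases the two subtree sizes again lie in $\{2^\ell - 1, 2^\ell, 2^\ell + 1\}$, so the inductive hypothesis gives uniqueness of each subtree and hence of $T$.

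For the ``$\Rightarrow$'' direction, assume $n \notin \{2^m - 1, 2^m, 2^m + 1\}$ for any $m$. Then $2^{k-1} + 2 \leq n \leq 2^k - 2$, so $c := n - 2^{k-1} \in [2, 2^{k-1} - 2]$. By Corollary \ref{optalg1}, every min-Sackin tree on $n$ leaves arises from $T_k^{bal}$ by choosing which $c$ of its $2^{k-1}$ max-depth cherries to retain (the rest collapsing to leaves), and two choices produce isomorphic trees if and only if the retained-cherry subsets lie in the same orbit under $\mathrm{Aut}(T_k^{bal})$. To produce two such orbits I would use the invariant
\[
\phi(S) := |\{P : P \text{ is one of the } 2^{k-2} \text{ cherry-pairs sharing a grandparent in } T_k^{bal},\ P \subseteq S\}|,
\]
which is plainly preserved by $\mathrm{Aut}(T_k^{bal})$. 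A short case split on whether $c \leq 2^{k-2}$ or $c > 2^{k-2}$ shows that $\phi$ attains at least two distinct values on $c$-subsets throughout $c \in [2, 2^{k-1}-2]$: in the first case one picks $c$ cherries from distinct grandparent-pairs ($\phi = 0$) versus includes a full grandparent-pair ($\phi \geq 1$); in the second case the same dichotomy is run on the complement, i.e.\ the $2^{k-1}-c$ deleted cherries.

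The main obstacle will be justifying the orbit-versus-isomorphism correspondence, i.e.\ that distinct $\mathrm{Aut}(T_k^{bal})$-orbits of $c$-subsets really do yield non-isomorphic output trees. The key observation is that any isomorphism between two algorithm outputs must preserve depth, so its restriction to nodes of depth at most $k-1$ induces an automorphism of $T_{k-1}^{bal}$ respecting the partition of its leaves into ``split'' (kept as a cherry of depth $k$) and ``unsplit'' (collapsed) types; this lifts canonically to an automorphism of $T_k^{bal}$ sending one retained-cherry subset to the other. Once this lifting lemma is in place, the invariant $\phi$ delivers the desired second tree and completes the proof.
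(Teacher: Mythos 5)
Your proposal is correct, and it rests on the same pivot as the paper --- Corollary \ref{optalg1}, i.e.\ the identification of Sackin minima with the outputs of Algorithm \ref{alg1}, viewed as the choice of which $c=n-2^{k-1}$ of the $2^{k-1}$ maximal-depth cherries of $T_k^{bal}$ to retain --- but the two arguments diverge in how they exploit it. For the backward direction the paper simply invokes the symmetry of $T_k^{bal}$ (deleting one cherry, or all but one, yields the same tree no matter which is chosen), whereas you run an induction through the standard decomposition using Theorem \ref{SackinMinCharacterization} to pin the split to $(2^{m-1},2^{m-1}-1)$ or $(2^{m-1}+1,2^{m-1})$; your route is longer but makes the ``by symmetry'' step fully explicit. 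For the forward direction the paper distinguishes two outputs by distributing the retained cherries differently across the two maximal pending subtrees, so the separating invariant is the leaf-count split $(n_a,n_b)$ of the standard decomposition (the two splits differ by one leaf, and $n\neq 2^{k-1}+1$ rules out their coinciding as unordered pairs); you instead separate orbits with the count $\phi$ of fully retained grandparent cherry-pairs, which forces you to also prove the orbit-versus-isomorphism correspondence via the depth-preserving restriction to $T_{k-1}^{bal}$ and its lift to $\mathrm{Aut}(T_k^{bal})$. That lifting lemma is sound (every node of an output tree at depth at most $k-2$ keeps both children, so the depth-$\le k-1$ part is a copy of $T_{k-1}^{bal}$ and any automorphism of it extends), and your case split on $c\le 2^{k-2}$ versus $c>2^{k-2}$ matches the paper's partition of $\{2^{k-1}+2,\dots,2^k-2\}$ at $2^k-2^{k-2}$. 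What your version buys is rigor --- the paper merely asserts that ``two options'' yield ``at least two optima'' without verifying non-isomorphism --- and, as a bonus, the machinery to count orbits rather than just detect a second one; what the paper's version buys is brevity, since the leaf-count invariant needs no automorphism argument at all. (A small simplification available to you: $\phi(S)$ can be read off directly from the output tree as the number of depth-$(k-2)$ vertices whose pending subtree is $T_2^{bal}$, which makes it a tree-isomorphism invariant outright and lets you skip the lifting lemma entirely.)
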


\begin{proof} In the following, let $n=n_a+n_b$ with $n_a \geq n_b$ and $k=\lceil\log_2(n)\rceil$. Now if $n=2^m$, then by Theorem \ref{thm_twodepthsminNEW} the minimum is indeed unique, which completes the first part of the proof.  

Next, consider the case $n=2^m-1$. Then, we prove the statement by induction on $m$. If $m=1$, we have $n=2^m-1=1$, which implies that there is only one rooted binary tree (namely the one consisting of only one node), so there remains nothing to show. So now assume that the statement holds for $m-1$ and consider $m$. Clearly, since $n=2^m-1$, we have $m=k$, and by the reasoning explained in the proof of Theorem \ref{recursion}, we must have $n_a \leq 2^{k-1}$. On the other hand, we also have $n_a \geq \frac{n}{2} = 2^{k-1}-\frac{1}{2}$ as $n$ is odd. As $n_a\in \mathbb{N}$, this implies $n_a \geq 2^{k-1}$. So altogether, we have $n_a=2^{k-1}$. This implies $n_b = n-n_a=2^k-1-2^{k-1}=2^{k-1}-1$. So $n_a$ and $n_b$ are uniquely determined, which implies that the sum in the recursion stated by Theorem \ref{recursion} only has one summand. This summand, however, is $s(n_a)\cdot s(n_b)=s(2^{k-1}) \cdot s(2^{k-1}-1)$, which by the first part of the proof and by induction equals $1 \cdot 1 =1$. Moreover, $f(n)=f(2^m-1)=0$ as $n$ is odd. This completes the second part of the proof.

Next, consider the case $n=2^m+1$. Then, we prove the statement by induction on $m$. If $m=1$, we have $n=2^m+1=3$, which implies that there is only one rooted binary tree (namely the rooted 3-leaf caterpillar), so there remains nothing to show. So now assume that the statement holds for $m-1$ and consider $m$. Clearly, since $n=2^m+1$, we have $k=m+1$, and by the reasoning explained in the proof of Theorem \ref{recursion}, we must have $n_a \leq 2^{k-1}=2^m$. On the other hand, we also have $n_a \geq \frac{n}{2} = 2^{m-1}+\frac{1}{2}$ as $n$ is odd. As $n_a\in \mathbb{N}$, this implies $n_a \geq 2^{m-1}+1$. Thus, $n_b \leq n-n_a=2^m+1 - (2^{m-1}+1) =2^{m-1}$. But as explained in the proof of Theorem \ref{recursion}, we must also have $n_b \geq 2^{k-2}=2^{m-1}$. So in total, $n_b=2^{m-1}$, which in turn implies that $n_a=n-n_b=2^m+1-2^{m-1}=2^{m-1}+1$. So $n_a$ and $n_b$ are uniquely determined, which implies that the sum in the recursion stated by Theorem \ref{recursion} only has one summand. This summand, however, is $s(n_a)\cdot s(n_b)=s(2^{m-1}+1) \cdot s(2^{m-1})$, which by the first part of the proof and by induction equals $1 \cdot 1 =1$. Moreover, $f(n)=f(2^m+1)=0$ as $n$ is odd. This completes the third part of the proof.

Now, assume that $n\notin \{2^m-1,2^m,2^m+1\}$ for any $m$. In particular, this implies that $n\geq 6$ and thus $k=\lceil\log_2(n)\rceil \geq 3$. In particular, for $k=\lceil\log_2(n)\rceil$, this implies that $n \in \{2^{k-1}+2,\ldots,2^k-2\}$. We first show that there are at least two choices for $n_b$ which both lead to valid choices for $n_a$. 
\begin{itemize} 
\item Consider $n_b:=2^{k-2}$ (which is possible as $k \geq 3$). This leads to $n_a=n-n_b \geq 2^{k-1}+2-2^{k-2} = 2^{k-2}+2 > n_b$ (and thus $n_a > \frac{n}{2}$) and $n_a=n-n_b \leq 2^k-2-2^{k-2} = 3\cdot 2^{k-2}-2<2^{k-1}$. So clearly, this choice of $n_b$ leads to a valid pair $(n_a,n_b)$ which will be considered in the sum of the recursion stated in Theorem \ref{recursion}.
\item Consider $n_b:=2^{k-2}+1$ (which is possible as $k \geq 3$). This leads to $n_a=n-n_b \geq 2^{k-1}+2-2^{k-2} -1= 2^{k-2}+1 > n_b$ (and thus $n_a > \frac{n}{2}$) and $n_a=n-n_b \leq 2^k-2-2^{k-2}-1 = 3\cdot 2^{k-2}-3<2^{k-1}$. So clearly, this choice of $n_b$ leads to a valid pair $(n_a,n_b)$ which will be considered in the sum of the recursion stated in Theorem \ref{recursion}.
\end{itemize}
So we have found two different summands for the recursion stated in Theorem \ref{recursion}, and as the recursion starts with $s(1)=1$, each pair contributes at least $s(n_a)\cdot s(n_b)=1 \cdot 1 =1$ to the sum. This immediately implies $s(n) \geq 2$.

So in summary, the minimum is unique if and only if $n\in \{2^m-1,2^m,2^m+1\}$ for some $m \in \mathbb{N}$. This completes the proof.
\end{proof}

\begin{remark} Starting at $n=1$ and continuing up to $n=32$, the sequence $s(n)$ of numbers of trees with $n$ leaves and with minimal Sackin index is 1, 1, 1, 1, 1, 2, 1, 1, 1, 3, 3, 5, 3, 3, 1, 1, 1, 4, 6, 14, 17, 27, 28, 35, 28, 27, 17, 14, 6, 4, 1, 1. We have calculated the values of $s(n)$ for up to $n=1024$. These data can be found online at \cite{NumberOfSackinMinima}. Note that this sequence is new to the Online Encyclopedia of Integer Sequences OEIS \cite[Sequence A299037]{OEIS}; it has been submitted in the scope of this manuscript. It had previously not been contained in the OEIS, i.e. this sequence has so far apparently not occurred in any other context.
\end{remark}

We end this section by noting that Theorem \ref{thm_twodepthsminNEW} actually guarantees that the difference between the two maximal pending subtrees $T_a$ and $T_b$ of the standard decomposition of a minimum Sackin tree does not get too large (in terms of the number of leaves). We will quantify this in the following corollary, which is is actually useful to investigate other balance indices like e.g. the Colless index \cite{colless}, and particularly their extremal properties \cite[Lemma 6]{Coronado2020}.\footnote{Corollary \ref{collesscorollary} corresponds to Corollary 4 in a preprint of the present manuscript, which was cited in \cite{Coronado2020}.}

\begin{corollary}\label{collesscorollary}
Let $T$ be a rooted binary tree with $n\in \mathbb{N}_{\geq 2}$ leaves. Moreover, let $T=(T_a,T_b)$ be the standard decomposition of $T$ into its two maximal pending subtrees, let $n_i$ denote the number of leaves in $T_i$ for $i\in \{a,b\}$, respectively, such that $n_a\geq n_b$. Let $k=\lceil \log_2 n\rceil$. Then, the following equivalence holds:
$T$ has minimum Sackin index if and only if $ T_a$ and $T_b$ have minimum Sackin index and $n_a-n_b\leq \min\{n-2^{k-1},2^k-n\}.$
\end{corollary}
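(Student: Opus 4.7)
The plan is to translate the bound $n_a - n_b \leq \min\{n - 2^{k-1}, 2^k - n\}$, using $n_a + n_b = n$, into the equivalent pair of inequalities $n_a \leq 2^{k-1}$ and $n_b \geq 2^{k-2}$. With this reformulation in hand, both directions follow by combining Lemma \ref{optStandardDecomp} with the characterization of maximally balanced trees in Theorem \ref{SackinMinCharacterization} and the converse provided by Theorem \ref{algrecovers}. Throughout I write $k_a=\lceil \log_2 n_a\rceil$ and $k_b=\lceil \log_2 n_b\rceil$, and (for $n\geq 3$) use the fact that every cherry of $T$ lies entirely inside $T_a$ or inside $T_b$, so that the cherries of depth $k$ in $T$ correspond bijectively to the cherries of depth $k-1$ in $T_a$ together with those of depth $k-1$ in $T_b$; the case $n=2$ is trivially verified directly.

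For the forward direction, I would assume $\mathcal{S}(T)$ is minimal. Lemma \ref{optStandardDecomp} gives that $\mathcal{S}(T_a)$ and $\mathcal{S}(T_b)$ are minimal as well, so Theorem \ref{SackinMinCharacterization} applies to all three trees. From $h_T=k$ and $h_{T_a}\geq h_{T_b}$ I would conclude $k_a=k-1$, which directly yields $n_a\leq 2^{k-1}$. To deduce $n_b\geq 2^{k-2}$ I would count cherries of depth $k$ in $T$: the $T_a$-contribution is $n_a-2^{k-2}$, and the $T_b$-contribution is either $n_b-2^{k-2}$ when $h_{T_b}=k-1$ (which forces $n_b>2^{k-2}$), or zero when $h_{T_b}<k-1$; in the second case, equating the total with the prescribed $n-2^{k-1}$ pins down $n_b=2^{k-2}$. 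In either case $n_b\geq 2^{k-2}$.

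For the reverse direction, I would instead assume that $T_a$ and $T_b$ are optimal and that $n_a\leq 2^{k-1}$ and $n_b\geq 2^{k-2}$. Since $k=\lceil \log_2 n\rceil$ gives $n>2^{k-1}$ and $n_a\geq n/2>2^{k-2}$, one has $k_a=k-1$ and hence $h_T=k$. The cherry count at depth $k$ again splits into two subcases: if $n_b>2^{k-2}$ (so $k_b=k-1$) the sum $(n_a-2^{k-2})+(n_b-2^{k-2})$ equals $n-2^{k-1}$, and if $n_b=2^{k-2}$ then by Theorem \ref{TkbalUNIQUE} $T_b=T^{bal}_{k-2}$ contributes no cherry of the required depth, while $T_a$ alone contributes $n_a-2^{k-2}=n-2^{k-1}$. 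Both cases produce the cherry count required by Theorem \ref{SackinMinCharacterization}, and Theorem \ref{algrecovers} then yields minimality of $\mathcal{S}(T)$.

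The main obstacle I foresee is not deep but requires careful bookkeeping: the bound $n_b\geq 2^{k-2}$ is saturated precisely at the boundary $n_b=2^{k-2}$, where $T_b$ is forced to equal $T^{bal}_{k-2}$ and contributes no cherry of depth $k$ to $T$; missing this boundary situation would leave a gap between the constraint on $(n_a,n_b)$ and the cherry count, so it must be tracked explicitly in both directions of the argument.
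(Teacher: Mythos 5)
Your proof is correct, but it takes a genuinely different route from the paper's. The paper argues \emph{algorithmically}: for the forward direction it leans on Corollary \ref{optalg1} and bounds $n_a-n_b$ by asking how unevenly the cherry deletions of Algorithm \ref{alg1} can be distributed between the two maximal pending subtrees (the extreme cases being ``all deletions in $T_b$'', giving $n_a=2^{k-1}$ and $n_a-n_b=2^k-n$, versus ``more than $2^{k-2}$ deletions'', forcing $T_b=T_{k-2}^{bal}$ and $n_a-n_b=n-2^{k-1}$); for the reverse direction it performs exactly your algebraic translation ($n_a\leq 2^{k-1}$ and $n_b\geq 2^{k-2}$) but then simply asserts that such leaf splits ``precisely correspond to the trees recovered by Algorithm \ref{alg1}''. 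You instead work \emph{statically} with the height-and-cherry-count characterization of Theorems \ref{SackinMinCharacterization} and \ref{algrecovers}, verifying $h_T=k$ and $c_T=n-2^{k-1}$ by splitting the depth-$k$ cherries of $T$ between $T_a$ and $T_b$ and counting each side's contribution via the same characterization one level down. What your route buys is explicitness precisely where the paper is terse: your cherry count closes the ``it is easy to see'' gap in the paper's reverse direction, and your boundary analysis at $n_b=2^{k-2}$ makes the saturation case visible rather than implicit in the algorithm's behaviour. Two small points would make it fully watertight: justify $h_{T_a}\geq h_{T_b}$ explicitly (it follows because the minimal subtrees satisfy $h_{T_i}=\lceil\log_2 n_i\rceil$ and $n_a\geq n_b$, not from $n_a\geq n_b$ alone), and note that Theorem \ref{SackinMinCharacterization} is stated for at least two leaves, so the sub-case $n_b=1$ --- which your cherry count rules out for $k\geq 3$ and which otherwise only arises at $n=3$ --- deserves one explicit sentence.
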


\begin{proof}
We first consider the case where $T$ has minimum Sackin index. It is clear that $T_a$ and $T_b$ then also must have minimum Sackin index, because by Definition \ref{sackin}, it can be easily seen that $\mathcal{S}(T)=\mathcal{S}(T_a)+\mathcal{S}(T_b)+n$. Thus, if, say, $T_b$ was not minimal, we could replace $T_b$ in $T$ by a tree $T_b'$ on $n_b$ leaves such that $\mathcal{S}(T_b')<\mathcal{S}(T_b)$. This would turn $T$ into a new tree $T'$ on $n$ leaves with $\mathcal{S}(T')=\mathcal{S}(T_a)+\mathcal{S}(T_b')+n<\mathcal{S}(T_a)+\mathcal{S}(T_b)+n=\mathcal{S}(T)$, which would contradict the minimality of $\mathcal{S}(T)$. 

It remains to show that $n_a-n_b\leq \min\{n-2^{k-1},2^k-n\}$. In order to do so, recall that by the reasoning explained in the proof of Theorem \ref{recursion}, in a Sackin minimal tree we must have $n_b\geq 2^{k-2}$ and $n_a \leq 2^{k-1}$ (In the sum of the recursion, this is explicitly stated, but both statements also hold if $n_a=n_b=\frac{n}{2}$, because $2^{k-1}<n \leq 2^k$ by the definition of $k$).

Now first assume that $n_a-n_b>n-2^{k-1}=n_a+n_b-2^{k-1}$. This implies $-2n_b>-2^{k-1}$ and thus $n_b < 2^{k-2}$, a contradiction.

Next, assume $n_a-n_b>2^k-n=2^k-n_a-n_b$. This implies $2n_a> 2^k$ and thus $n_a > 2^{k-1}$, a contradiction.

So we must have $n_a-n_b\leq \min\{n-2^{k-1},2^k-n\}$. This completes the proof. 
\end{proof}

\color{black}
\section{Counting different types of extremal Sackin trees} \label{sec_transfer}

In this section, we first generalize our considerations to non-binary trees and then transfer our findings on the Sackin index back to the two disciplines where they are probably needed the most, namely mathematical phylogenetics and theoretical computer science. 

\subsection{Non-binary trees}\label{sec_nonbin}

Both in computer science as well as in mathematical phylogenetics, where tree balance plays an important role, binary trees are by far the most relevant trees. However, in both contexts, at times also non-binary trees\footnote{Recall that \enquote{non-binary} is used here as is common in the phylogenetic literature, i.e. in the sense of \enquote{not necessarily binary}.} play a role (for instance, in phylogenetics you can depict insecurity concerning the speciation order by non-binary trees). Also, as opposed to some other balance indices like e.g. the Colless index \cite{colless}, the definition of the Sackin index works for general rooted trees and not just for binary ones. Therefore, we drop the assumption of our trees being binary in this subsection and compare the extremal Sackin trees derived this way with the ones from the binary setting.

Again we start with the maximum Sackin index. In fact, we will show that nothing changes -- the binary caterpillar is still the unique tree maximizing the Sackin index even if the restriction to binary trees is dropped.

\begin{theorem}\label{thm_catNonBin} Let $T$ be a rooted non-binary tree with maximal Sackin index and $n$ leaves. Then, $T$ equals the caterpillar $T_n^{cat}$.
\end{theorem}

\begin{proof} Let $T$ be a non-binary tree with maximal Sackin index. Assume $T$ has an inner vertex $v$ that has at least three children $x$, $y$ and $z$ (and possibly more). We now construct a tree $T'$ as follows: We replace $v$ by two new vertices $v_1$ and $v_2$, which are connected by a new edge $(v_1,v_2)$, such that $v_1$ is adjacent to $x$ and $v_2$ and -- if $v$ is not the root of $T$ -- the ancestor of $v$ in $T$. Moreover, $v_2$ is adjacent to all other vertices to which $v$ is adjacent in $T$, that is, in particular, $y$ and $z$. So both $v_1$ and $v_2$ have at least two descendants. However, all leaves that descend from $v_2$ have increased their depth by 1 in $T'$ compared to $T$. This is due to the additional edge $(v_1,v_2)$ on the way from the root to these leaves. On the other hand, no depth has decreased in $T'$ compared to $T$. So by Definition \ref{sackinalternative}, this implies that the Sackin index of $T'$ is strictly larger than that of $T$, which contradicts the maximality of $T$. So the assumption was wrong, which implies that $T$ is binary. Thus, as $T$ is binary and has maximal Sackin index, by Theorem \ref{thm_sackinCat}  $T$ is a binary caterpillar. This completes the proof.
\end{proof}

\begin{remark} \label{rem_unbounded} Note that both for Theorem \ref{thm_sackinCat} as well as for Theorem \ref{thm_catNonBin} it is crucial that we excluded rooted trees with degree 2 vertices (other than possibly the root) from our considerations, because by adding more and more degree 2 vertices to a tree, we could increase the leaf depths and thus, by Definition \ref{sackinalternative}, the Sackin index would be unbounded.
\end{remark}

Now we turn our attention to the minimum Sackin index in the non-binary case. It turns out that interestingly, this scenario differs from the binary case, because while the minimum need not be unique in the binary case, in the non-binary case it is unique for all leaf numbers -- and even in the case where $n=2^k$ for some $k$, the unique non-binary tree minimizing the Sackin index is \emph{not} $T_k^{fb}$. In fact, the star tree turns out to be optimal for all $n$.

\begin{theorem} \label{thm_star} Let $n\geq 2$. Then, the star tree $T_n^{star}$ is the unique rooted non-binary tree minimizing the Sackin index, and we have $\mathcal{S}(T_n^{star})=n$.
\end{theorem}

\begin{proof} We first show the second assertion. Clearly, in a star tree, all leaves have distance 1 to the root, so $\mathcal{S}(T_n^{star})=\sum\limits_{x \in V^1}\delta_x=\sum\limits_{x \in V^1}1=n$. Moreover, this is clearly minimal, as in a tree with at least two leaves, we have $\delta_x\geq 1$ for all leaves. So the only thing left to show is uniqueness. Assume that $T$ is a rooted tree on $n$ leaves that also has minimal Sackin index but is not a star tree. Then $T$ has at least two inner vertices (the root $\rho$ and at least one more). Let $v$ be a child of $\rho$ that is also an inner vertex. Now we construct a tree $T'$ from $T$ by contracting the edge $(\rho,v)$. This way, all leaves descendant from $v$ now have strictly decreased their depths by 1, and no leaf has increased its depth. Thus, the Sackin index of $T'$ is strictly smaller than that of $T$, clearly contradicting the minimality of $T$. Thus, the assumption was wrong. This completes the proof of the theorem.
\end{proof}

In the following section, we turn our attention to mathematical phylogenetics, i.e. to leaf-labelled trees. 

\subsection{Phlyogenetic trees}\label{sec_phylo}

As mentioned earlier, trees play a fundamental role in mathematical phylogenetics. In fact, trees can be used to depict the evolutionary relationships between present-day species. However, mere graph theoretic trees as discussed so far in this manuscript do not suffice to do so; instead, we need \emph{phylogenetic trees} as introduced in Section \ref{sec_prelim}. 

We know from Theorems \ref{thm_sackinCat} and \ref{thm_catNonBin} that there is only one (graph theoretic) tree maximizing the Sackin index for any value of $n$ both in the binary and non-binary case (namely the caterpillar), but the story is quite different for phylogenetic trees: For instance, already for $n=3$, there are three different phylogenetic caterpillars (the one with the cherry $[1,2]$, the one with the cherry $[1,3]$ and the one with the cherry $[2,3]$), so the maximum need not be unique anymore. Concerning minimal trees, we know from Theorem \ref{recursion} how to calculate the number $s(n)$ of Sackin minimal (graph theoretical) trees, but we do not know yet how this generally translates to phylogenetic trees, as it is known that, due to symmetry, not all trees with $n$ leaves lead to the same number of phylogenetic trees. In fact, as a consequence of the famous Burnside Lemma \cite{burnside,frobenius}, the following assertion is well-known in mathematical phylogenetics.

\begin{proposition}{\upshape \textbf{(Corollary 2.4.3 in \cite{semple_steel_2003})}}\label{burnside} Let $T$ be a rooted binary tree with $n$ leaves. Let $\widetilde{s}(T)$ denote the number of symmetry nodes of $T$, i.e. the number of inner vertices of $T$ whose maximal pending subtrees are isomorphic. Then, the number of phylogenetic trees on leaf set $X=\{1,\ldots,n\}$ of shape $T$ equals $$\frac{n!}{2^{\widetilde{s}(T)}}.$$
\end{proposition}

This immediately leads to the following corollary for Sackin maximal phylogenetic trees.

\begin{corollary} Let $p^{max}(n)$ denote the number of phylogenetic $X$-trees with $X=\{1,\ldots,n\}$  and with maximal Sackin index. Then, we have $p^{max}(n)=\frac{n!}{2}$, and all of these trees have a binary caterpillar as underlying graph theoretic tree. This statement is still true even if we do not restrict the maximization to binary trees.
\end{corollary}

\begin{proof} As the Sackin index does not depend on the leaf labeling, it is clear by Theorems \ref{thm_sackinCat} and \ref{thm_catNonBin} that the underlying graph theoretic (i.e. non-leaf-labelled) tree must be a caterpillar. Moreover, a caterpillar has precisely one symmetry node, namely the parent of its only cherry. Thus, the assertion immediately follows by Proposition \ref{burnside}.
\end{proof}

\begin{remark} Starting at $n=1$ and continuing up to $n=20$, the sequence $p^{max}(n)$ of numbers of phylogenetic trees with $n$ leaves and with maximal Sackin index is 1, 3, 12, 60, 360, 2520, 20160, 181440, 1814400, 19958400, 239500800, 
3113510400, 43589145600, 653837184000, 10461394944000, 
177843714048000, 3201186852864000, 60822550204416000, 
1216451004088320000. Note that this sequence is well-known and plays a role in several contexts in combinatorics, for instance it describes the number of permutations of the numbers $1,\ldots,n$ in which $2$ follows $1$. For more details and other examples, we refer the interested reader to sequence A001710 in the Online Encyclopedia of Integer Sequences OEIS \cite{OEIS}.
\end{remark}

So as Proposition \ref{burnside} shows, symmetry nodes play a fundamental role in order to determine the number of phylogenetic trees per graph theoretic tree. For instance, tree $T_1$ from Figure \ref{6taxabalanced} has 3 symmetry nodes and six leaves, so there are $\frac{6!}{2^{3}}=90$ phylogenetic trees associated with $T_1$. But $T_2$, which has 4 symmetry nodes, only leads to $\frac{6!}{2^{4}}=45$ phylogenetic trees. So both trees lead to different numbers of phylogenetic trees, but both trees form the set of maximally balanced graph theoretic trees according to the Sackin index ($s(6)=2$ and we have already seen that $T_1$ and $T_2$ are both maximally balanced). So in order to determine the number of maximally balanced phylogenetic trees, we cannot simply multiply $s(T)$ with some number depending on $n$, because in fact, the number of associated phylogenetic trees might be different for all trees counted by $s(T)$. However, we can use similar arguments as in Theorem \ref{recursion} to derive a recursive formula for the number of binary phylogenetic trees that minimize the Sackin index.

\begin{theorem}\label{phylorecursion} 
Let $p^{min}(n)$ denote the number of rooted binary phylogenetic $X$-trees with $X=\{1,\ldots,n\}$ and with minimal Sackin index, and let $k = \lceil \log_2(n)\rceil$. For any partition of $n$ into two integers $n_a$, $n_b$, i.e. $n=n_a+n_b$, we use $k_a$ and $k_b$ to denote $\lceil \log_2(n_a)\rceil$ and $\lceil \log_2(n_b)\rceil=\lceil \log_2(n-n_a)\rceil$, respectively. Moreover, let 

$$g(n)=\begin{cases}0 & \mbox{if $n$ is odd,} \\ {\frac{1}{2}\cdot {n \choose \frac{n}{2}} \cdot \left(p^{min}\left(\frac{n}{2}\right)\right)^2 }& \mbox{else.  }\end{cases} $$  Then, the following recursion holds: 
\begin{itemize}
\item $p^{min}(1)=1$
\item $p^{min}(n)=  \sum\limits_{\substack{(n_a,n_b): \\n_a+n_b=n,\\ \frac{n}{2}< n_a\leq 2^{k-1} ,\\ n_b\geq 2^{k-2} }}  {n \choose n_a} \cdot p^{min}(n_a)\cdot p^{min}(n_b) +g(n)$
\end{itemize}

\end{theorem}

\begin{proof} First consider $n=1$. In this case, it is clear that there is only one rooted binary phylogenetic tree, namely the one consisting of only one node labelled with 1, which therefore has minimal Sackin index, which implies $p^{min}(1)=1$.

Now let $n>1$ and recall the proof of Theorem \ref{recursion}: As explained there, $n$ can be partitioned into two summands $n_a$ and $n_b$ such that $n_a \geq n_b$ and thus in particular $n_a\geq \frac{n}{2}$. We first consider the case where $n_a \neq n_b$. 

Again as in the proof of Theorem \ref{recursion}, a Sackin minimal rooted binary phylogenetic tree $T$ with at least 2 leaves can be decomposed into its two maximal pending subtrees $T_a$ and $T_b$, both of which must be Sackin minimal, too (otherwise we could replace the non-minimal one by a minimal one on the same number of leaves and thus achieve a tree with a lower Sackin index, which would contradict the minimality of $T$). So we need to count the number of such combinations, where $T_a$ has $n_a$ leaves and $T_b$ has $n_b$ many leaves. 

We first consider the trees for which $n_a > \frac{n}{2}$ and thus $n_a > n_b$. As we only consider Sackin minimal phylogenetic trees and as the Sackin index depends only on the underlying graph theoretic tree and not on the leaf labelling, we can conclude by the same arguments as in the proof of Theorem \ref{recursion} that we only need to consider cases with $n_b \geq 2^{k-2}$ and $n_a \leq 2^{k-1}$, as no other tree can be Sackin minimal.

So all $p^{min}(n_a)\cdot p^{min}(n_b)$ such combinations of Sackin minimal rooted binary phylogenetic trees on $n_a$ and $n_b$ leaves (where $\frac{n}{2} \leq n_a\leq 2^{k-1}$ and $n_b \geq 2^{k-2}$ and $n_a+n_b=n$, respectively) lead to different Sackin minimal phylogenetic trees on $n$ leaves (counting trees twice has been prevented by the restriction $n_a\geq \frac{n}{2}$). However, we need to multiply these options with all possible ways to label $T_a$ and $T_b$: We have ${n \choose n_a}$ many ways to pick $n_a$ labels from our label set of size $n$ in order to assign these labels to $T_a$. The remaining labels get assigned to $T_b$. In summary, this explains the first summand in the recursion.

Now consider the case $n_a=n_b=\frac{n}{2}$, which of course only needs to be considered if $n$ is even (which explains why $g(n)=0$ if $n$ is odd). 

Now if $n$ is even, there is another summand: We can proceed as in Case 1, but here we have to consider all combinations of 2 trees from the $p(n_a)=p(n_b)$ phylogenetic trees with $n_a=n_b$ leaves and with minimum Sackin index, and there are $p(n_a)\cdot p(n_a) = p\left(\frac{n}{2} \right) \cdot p\left(\frac{n}{2} \right)$ such ordered combinations. However, note that this implies that we now have counted some combinations twice if you disregard the ordering; namely all those for which $T_a$ and $T_b$ are not isomorphic (and thus the root $\rho$ of $T$ is not a symmetry node), but not the ones for which $T_a$ and $T_b$ are isomorphic. We will come back to this fact in a bit. Now first note that we still need to consider all possible leaf labelings, of which there are ${n \choose n_a}= {n \choose \frac{n}{2}}$. However, if we label the trees like this, every tree that has non-isomorphic $T_a$ and $T_b$ subtrees will lead to different phylogenetic trees for each choice of leaf labels. So of every phylogenetic tree resulting from such a graph theoretic tree, we now have two copies in our multiset. For the trees with isomorphic $T_a$ and $T_b$, however, of which we only have one graph theoretic copy in the set, we have now also two phylogenetic trees each, because for instance the assignment of $\{1,\ldots,\frac{n}{2}\}$ to $T_a$ and $\{\frac{n}{2}+1,\ldots,n\}$ to $T_b$ results in precisely the same trees as the assignment of $\{\frac{n}{2}+1,\ldots,n\}$ to $T_a$ and $\{1,\ldots,\frac{n}{2}\}$ to $T_b$. In summary, we now have a multiset containing ${n \choose \frac{n}{2}} \cdot p\left(\frac{n}{2}\right) \cdot p\left(\frac{n}{2}\right)$ many phylogenetic trees, but each of them appears twice. So in order to count every phylogenetic tree only once, we have to multiply their number by $\frac{1}{2}$. This explains the second summand in the recursion and thus completes the proof.
\end{proof}

\begin{remark} Starting at $n=1$ and continuing up to $n=20$, the sequence $p^{min}(n)$ of numbers of rooted binary phylogenetic trees with $n$ leaves and with minimal Sackin index is 1, 1, 3, 3, 30, 135, 315, 315, 11340, 198450, 2182950, 16372125,
85135050, 297972675, 638512875, 638512875, 86837751000,
5861548192500, 259861969867500, 8445514020693750. We have calculated the values of $p(n)$ for up to $n=100$. These data can be found online at \cite{NumberOfPhyloSackinMinima}. Note that this sequence is so far not contained in the  Online Encyclopedia of Integer Sequences OEIS \cite{OEIS}, but it has recently been submitted and will soon appear online. As it had previously not been contained in the OEIS, this sequence has so far apparently not occurred in any other context.
\end{remark}

Before we end this subsection, we turn our attention again to the non-binary case and the minimal Sackin index. By Theorem \ref{thm_star}, the star tree is the unique non-binary tree with minimal Sackin index. This directly leads to the following corollary.

\begin{corollary} Let $n\geq 2$. Then, the unique rooted non-binary phylogenetic tree on leaf set $X=\{1,\ldots,n\}$ is $T^{star}_n$ with its leaves bijectively labelled by $X$.
\end{corollary}

\begin{proof} By Theorem \ref{thm_star}, it is clear that the underlying graph theoretic tree of a Sackin minimal rooted non-binary tree necessarily is $T_n^{star}$. Moreover, it can be easily seen (formally: by using the Burnside lemma, where all leaves are in the same equivalence class) that all permutations of leaf labels lead to the same phylogenetic tree due to symmetry. This completes the proof.
\end{proof}

\subsection{Ordered trees}\label{sec_ordered}

We now turn our attention to computer science, where so-called search trees play a fundamental role. Such trees basically form a kind of data structure which can be used to locate certain keys from within a set. However, just as in the previous subsection, mere graph theoretic trees are not sufficient to operate as search trees. This is due to the fact that for instance all keys to the left of an inner node with key $\mathcal{K}$ must be smaller than $\mathcal{K}$, whereas the keys to the right of this inner node must be larger than $\mathcal{K}$ (or vice versa, but the convention that the smaller keys are in the left subtree is quite common) \cite{knuth1,knuth3}. But this immediately implies that we need to be able to distinguish between left and right, which we did not do in the graph theoretic setting. This is why in computer science, rooted binary \emph{ordered} trees as introduced in Section \ref{sec_prelim} play a fundamental role. However, as we do not limit our analyses to the binary case, the example with left and right needs to be thought of in a more general way, accordingly.

\begin{remark} \label{rem_deg2} In computer science, as opposed to the settings discussed so far in this manuscript, inner vertices of total degree 2 are sometimes allowed, i.e. in such settings may happen that a vertex only has a left or only a right child. However, by the same reasoning as mentioned in Remark \ref{rem_unbounded}, this would mean that the maximimum of the Sackin index would be unbounded. Moreover, it can be easily seen that the set of Sackin minimal trees will not change if we allow for the optimization to consider trees with degree-2 vertices, too, because a tree with a degree-2 vertex can never have minimal Sackin index. This is due to the fact that suppressing a degree-2 vertex would strictly decrease the leaf depths of all leaves descending from this vertex. As degree-2 vertices thus do not contribute anything to the understanding of Sackin minimal ordered trees, we keep excluding them from our considerations, i.e. we still only consider trees without such vertices. \end{remark}

We now recall the following result from the literature.

\begin{lemma}[c.f. Lemma 2.6 in \cite{raaz}] \label{raazlem} Let $T$ be a rooted binary graph theoretic tree with $n\geq 2$ leaves. Then, the number of ordered trees corresponding to $T$ is $2^{n-1-\widetilde{s}(T)}$, where $\widetilde{s}(T)$ denotes the number of symmetry nodes of $T$, i.e. the number of inner nodes whose two maximal pending subtrees are isomorphic. 
\end{lemma}

The idea of Lemma \ref{raazlem} is that each of the $n-1$ internal nodes in a rooted binary tree except for the symmetry nodes doubles the number of orderings. The fact that symmetry nodes do not do this is due to the fact that isomorphic trees are undistinguishable and thus can only be counted once. 

Lemma \ref{raazlem} together with Theorem \ref{thm_sackinCat} immediately lead to the following result.

\begin{corollary} \label{orderedSackinmax}The number of binary rooted and ordered trees with $n$ leaves with maximal Sackin index is $2^{n-2}$ if $n\geq 2$ or 1 otherwise, and all of these trees have $T_n^{cat}$ as underlying graph theoretic tree. This statement is still correct in the non-binary case, i.e. even if drop the restriction of the maximization to binary trees.
\end{corollary}

\begin{proof} As the Sackin index does not depend on the leaf labeling, it is clear by Theorems \ref{thm_sackinCat} and \ref{thm_catNonBin} that the caterpillar is the only tree that can underly an ordered tree that maximizes the Sackin index, binary or non-binary. Moreover, a caterpillar has precisely one symmetry node, namely the parent of its only cherry. Thus, the assertion immediately follows by Lemma \ref{raazlem}.
\end{proof}

\begin{remark} Starting at $n=1$ and continuing up to $n=32$, the sequence of numbers of ordered trees with $n$ leaves and with maximal Sackin index is 1,1, 2, 4, 8, 16, 32, 64, 128, 256, 512, 1024, 2048, 4096, 8192, 16384, 32768, 65536, 131072, 262144, 524288, 1048576, 2097152, 4194304,8388608, 16777216, 33554432, 67108864,134217728, 268435456, 536870912, 1073741824. This sequence has already long been contained in the Online Encyclopedia of Integer Sequences OEIS \cite{OEIS} as sequence A011782. Note that this sequence is well-known and plays a role in several contexts in combinatorics, for instance in order to count so-called unimodal permutations of $n$ items. For more details and other examples, we refer the interested reader to sequence A011782 in the OEIS.
\end{remark}

One of the main reasons why tree balance plays a fundamental role in theoretical computer science is that if ordered search trees are reasonably balanced, they lead to an efficient search time \cite{knuth1}. Therefore, the number of maximally balanced trees, i.e. of trees with minimum Sackin index, is of far greater interest than the number stated by Corollary \ref{orderedSackinmax}. Thus, we now proceed with considering the number of such minima in the binary case, which can very easily be calculated.

\begin{theorem}\label{orderedformula} 
Let $ot(n)$ denote the number of rooted binary ordered trees with $n$ leaves and with minimal Sackin index, and let $k = \lceil \log_2(n)\rceil$. Then, we have: $$ot(n)={2^{k-1} \choose n-2^{k-1}}.$$
\end{theorem}

\begin{proof} As we have seen in Theorem \ref{thm_twodepthsminNEW}, all maximally balanced trees can be constructed by taking $T_{k-1}^{fb}$, i.e. a fully balanced tree of height $k-1$, and replacing $n-2^{k-1}$ leaves by cherries. There are ${2^{k-1} \choose n-2^{k-1}}$ ways to choose $n-2^{k-1}$ leaves from the $2^{k-1}$ leaves of $T_{k-1}^{fb}$ to turn them into cherries (unordered sampling without replacement). As we are considering ordered trees here, every such choice leads to a different tree. This completes the proof. \end{proof}

\begin{remark} Starting at $n=1$ and continuing up to $n=20$, the sequence $ot(n)$ of numbers of ordered trees with $n$ leaves and with minimal Sackin index is 1, 1, 2, 1, 4, 6, 4, 1, 8, 28, 56, 70, 56, 28, 8, 1, 16, 120, 560, 
1820, 4368, 8008, 11440, 12870, 11440, 8008, 4368, 1820, 560, 120, 16,
 1. We have calculated the values of $ot(n)$ for up to $n=256$. These data can be found online at \cite{NumberOfOrderedSackinMinima}. Note that this sequence is so far not contained in the  Online Encyclopedia of Integer Sequences OEIS \cite{OEIS}, but it has recently been submitted and will soon appear online. As it had previously not been contained in the OEIS, this sequence has so far apparently not occurred in any other context.
\end{remark}

Last, we once more turn our attention to the non-binary case and the minimal Sackin index. By Theorem \ref{thm_star}, the star tree is the unique non-binary tree with minimal Sackin index. This directly leads to the following corollary.

\begin{corollary} Let $n\geq 2$. Then, the unique rooted non-binary ordered tree with $n$ leaves is $T^{star}_n$ (regarded as ordered tree).
\end{corollary}

\begin{proof} By Theorem \ref{thm_star}, it is clear that the underlying graph theoretic tree of a Sackin minimal rooted non-binary tree necessarily is $T_n^{star}$. Moreover, it can be easily seen that due to symmetry, there is only one ordering of the star tree (as all leaves pendant from the root can be considered as isomorphic subtrees which are thus indistinguishable). This completes the proof.
\end{proof}

\section{Discussion}
One aim of this manuscript was to provide deep insight into Sackin minimal and maximal trees both in the binary and non-binary settings -- partially by proving new results and partially by transferring results from other research areas. The main aim, however, was to provide a way to count Sackin minimal trees in the binary case (as the maxima in the binary and non-binary cases as well as the minimum in the non-binary case are all unique, so there is nothing to count). This was achieved through the recursion given by Theorem \ref{recursion}. 

An additional aim of this manuscript was to count different types of Sackin minimal trees, too, and thus to make these results interesting and relevant for the very disciplines where balanced trees are most important: mathematical phylogenetics and theoretical computer science. This led to a recursive formula for rooted binary phylogenetic trees in Theorem \ref{phylorecursion} and an explicit formula for rooted binary ordered trees in Theorem \ref{orderedformula}. We also showed that in all other cases, the extremal Sackin trees are unique, so there is nothing to count. 

Our enumerations of trees with $n$ leaves have led to some sequences which are new to the Online Encyclopedia of Integer Sequences and some which are well-known and have occurred in other areas of combinatorics before. This link between tree balance and other  topics might inspire future research. 

Another possible area of future research is the relatedness of the Sackin index to other balance indices. It was recently proven that the set of Sackin minimal binary trees contains all Colless minimal and Cophenetic minimal trees \cite{Coronado2020}, for instance, but there are various other balance indices whose relationship to the Sackin index has not been investigated yet. 
Possibly the most important aspect for future research resulting from this manuscript, however, might be implications of our findings concerning the number of extremal trees concerning the Sackin index, on evolutionary models and their induced probability distributions on the tree space.

\section*{Acknowledgements}
The author wishes to thank Lina Herbst, Kristina Wicke and Mike Steel for very helpful discussions on the general topic of the Sackin index and for comments concerning an earlier version of the manuscript. Moreover, the author wishes to thank an anonymous reviewer for very helpful comments; particularly for very helpful comments which greatly helped to improve this manuscript. Last but not least, the author wishes to thank the joint research project {\bf \emph{DIG-IT!}} supported by the European Social Fund (ESF), reference: ESF/14-BM-A55-0017/19, and the Ministry of Education, Science and Culture of Mecklenburg-Vorpommern, Germany.

\color{black}

\bibliographystyle{natbib}
\bibliography{bibfile}

\end{document}